\newtheorem{thm}{Theorem}
\newtheorem{prop}[thm]{Proposition} 
\newtheorem{cor}[thm]{Corollary}
\theoremstyle{definition}
\newtheorem{defi}{Definition}
\theoremstyle{remark}
\newcommand*{\set}[1]{\left\{#1\right\}}
\title{Infection arbitrage}
\author{Sander Heinsalu\thanks{Research School of Economics, Australian National University, 
25a Kingsley St, Acton ACT 2601, Australia.
Email: sander.heinsalu@anu.edu.au, 
website: \url{https://sanderheinsalu.com/}
The author thanks MIT for its hospitality during the writing of this paper. 
}}
\date{\today}
\begin{document}
\maketitle
\begin{abstract}
Increasing the infection risk early in an epidemic is individually and socially optimal under some parameter values. The reason is that the early patients recover or die before the peak of the epidemic, which flattens the peak. This improves welfare if the peak exceeds the capacity of the healthcare system and the social loss rises rapidly enough in the number infected. The individual incentive to get infected early comes from the greater likelihood of receiving treatment than at the peak when the disease has overwhelmed healthcare capacity. 

Calibration to the Covid-19 pandemic data suggests that catching the infection at the start was individually optimal and for some loss functions would have reduced the aggregate loss. 

Keywords: epidemiology; dynamic optimization; intertemporal decisions; arbitrage; welfare. 
	
JEL classification: 
C61; 
D83; 
D91 
\end{abstract}


In a severe enough epidemic that is likely to overwhelm the capacity of the healthcare system, getting infected early is individually rational, in particular under most parameter values consistent with the data on the Covid-19 pandemic. 
The reason is that getting treated early when the healthcare system still has free capacity is better than facing rationing of medical services during the peak of the epidemic.
If the fraction infected at the peak is large enough, then the higher likelihood of treatment when catching the disease early outweighs the greater total probability of being infected when trying to catch it early (as opposed to reducing risk throughout the epidemic). 
The preference for early exacerbation holds for any health condition likely to need treatment at some point during the epidemic, not just for the disease causing the epidemic. 

Getting infected early imposes the obvious negative \emph{transmission externality} on others---the initial disease carriers spread the infection. However, the welfare effect of early infection is ambiguous because of a positive \emph{peak load externality}: the initially infected people recover or die before the peak of the epidemic, thus reduce the load on the healthcare system when its capacity is needed the most. 
The balance of the transmission and peak load externalities determines whether infecting part of the population initially improves welfare. 

If many people try to get infected early, then the peak of the epidemic arrives earlier and may be higher or lower than without attempted early infection. The more people who try to acquire the disease early, the lower the incentive to acquire it initially, because the less spare healthcare capacity remains early on. Symmetrically, this arbitrage argument shows that the more people who try to delay the infection before the peak, the lower the incentive to protect oneself during that time. The strategic substitutability of risk choices leads to a mixed equilibrium in which some fraction of the population arbitrages the infection timing by getting infected early. The rest best respond by doing nothing until the fraction infected is large enough, at which point everyone tries to reduce their risk until the peak of the epidemic passes. 

The arbitrage of infection timing flattens the peak of the epidemic and spreads it over a longer time. The welfare effect of this depends on the loss function. If the loss rises fast in the fraction infected, then welfare depends mostly on the height of the peak. In this case, reducing the maximal prevalence of the disease reduces aggregate losses. If the loss increases slowly in the fraction infected, then the wider and lower peak under early infection is worse than the higher and later peak under the standard policy of reducing risk. 

The literature on theoretical epidemiology is vast, starting from \cite{kermack+mckendrick1927}. The study of the incentives to protect oneself goes back at least to \cite{peltzman1975}, although in a safety context. An early review of economic epidemiology is the book of \cite{philipson+posner1993}. 

\cite{toda2020} finds that delaying infection control efforts until a significant fraction of the population is infected is socially optimal. The present work takes the natural next step and shows that the opposite intervention to the usual risk reduction is individually and socially optimal in the early stage of a severe epidemic. 

The individual and social optimality of protection from disease are also studied in \cite{auld1996,auld2006,chen+toxvaerd2014,fenichel2013,galeotti+rogers2013,goyal+vigier2015,heinsalu2020,kremer1996,kremer+morcom1998,
rowthorn+toxvaerd2017,talamas+vohra2019,toxvaerd2019}. To the author's knowledge, the literature has not considered deliberate early infection as a feasible (let alone optimal) individual choice or policy instrument. 

Mathematically, the environment with a changing hazard rate of an event (infection) over time resembles \cite{khan+stinchcombe2015}. The current paper studies a game in addition to an individual decision. Each agent solves a control problem (chooses risk every period) instead of a stopping problem (when to vaccinate). 

The next section examines individual incentives to reduce or increase risk, and calibrates the results to the Covid-19 data from the empirical literature. Section~\ref{sec:interacting} studies infection arbitrage by interacting decision makers. Welfare comparisons using the compartmental susceptible-infected-recovered model are in Section~\ref{sec:SIR}. Section~\ref{sec:contt} extends the individual and equilibrium results to continuous time. A discussion of policy implications is in Section~\ref{sec:conclusion}.

\section{A two-period example}
\label{sec:twoperiod}

A continuum of individuals each face a risk of infection $r_1\in(0,1)$ in period $t=1$ and if this risk does not realize, then a greater risk $r_2\in(r_1,1)$ in period $2$. The loss $\ell_t>0$ conditional on infection also increases over time: $\ell_2 >\ell_1$. The loss $\ell_t$ summarizes the expected discounted value of all negative consequences of the risk. Rising risk and loss over time reflect an epidemic spreading in the population, which both increases the infection probability and leaves fewer medical resources per patient, thus worsens the expected consequences of infection. The improvement in treatment or the availability of supplies is assumed slower than the spread of the epidemic, and is incorporated in $\ell_t$ w.l.o.g.  

Before being infected, the individual can take an action $a_{t}\in \set{\check{a}_t,1,\hat{a}_t}$, with $0<\check{a}_t<1<\hat{a}_t<\frac{1}{r_t}$, to decrease or increase the infection risk to $a_tr_t$ that period. Action $a_t=1$ is costless and leaves the risk unchanged, $\check{a}_t$ costs $\check{c}\in(0,r_1\ell_1)$ and decreases the risk, and $\hat{a}_t$ costs $\hat{c} >0$ and increases the risk. 
Example actions reducing infection risk are social distancing measures, using protective equipment, following hygiene practices. 
Example actions increasing the risk are doing (more) work in healthcare, contaminated waste disposal, or in crowded locations, or deliberately obtaining the disease agent and infecting oneself. 

In the second period (at the peak of the epidemic), the individual does not increase the risk, because $-\hat{a}_{2}r_2\ell_2 -\hat{c} <-r_2\ell_2 $. Decreasing the risk is optimal if $-\check{a}_{2}r_2\ell_2 -\check{c} >-r_2\ell_2 $, which reduces to 
\begin{align}
\label{decr2}
\check{c} <(1-\check{a}_{2})r_2\ell_2.
\end{align} 
If $\check{c} \geq(1-\check{a}_{2})r_2\ell_2$, then the individual leaves the risk unchanged. 
Only the continuation value $\pi_2^* $ for an uninfected person at the start of period $2$ matters for subsequent results. 
The continuation payoff of an infected individual is zero, because the loss from infection has already been incorporated into the first period payoff. 

Suppose $\check{c} <(1-\check{a}_{2})r_2\ell_2$, interpreted as a large risk and loss from catching the disease at the height of the epidemic, high prevention effectiveness $1-\check{a}_{2}$, or a small cost of preventive measures. 
Then $\pi_2^* =-\check{a}_{2}r_2\ell_2 -\check{c}$. Otherwise, the individual optimally does nothing and obtains $\pi_2^* =-r_2\ell_2$. 

In the first period, the individual compares doing nothing (payoff $-r_1\ell_1+(1-r_1)\pi_2^*$), reducing the risk (payoff $-\check{a}_{1}r_1\ell_1 -\check{c} +(1-\check{a}_{1}r_1)\pi_2^*$) and increasing the risk ($-\hat{a}_{1}r_1\ell_1 -\hat{c} +(1-\hat{a}_{1}r_1)\pi_2^*$). 
Raising one's risk is optimal if 
$(1-\hat{a}_{1})r_1\ell_1 +(1-\hat{a}_{1})r_1\pi_2^* >\hat{c}$ and $(\check{a}_{1}-\hat{a}_{1})r_1\ell_1 +(\check{a}_{1}-\hat{a}_{1})r_1\pi_2^* > \hat{c}-\check{c}$. 
These are equivalent to  
$-r_1(\ell_1 +\pi_2^*) >\frac{\hat{c}}{\hat{a}_{1}-1}$ because $\hat{c}-\check{c} <\hat{c}$ and $\hat{a}_{1}-\check{a}_{1} >\hat{a}_{1}-1$. 
Recall that $\pi_2^*<0$. 
If the cost of increasing one's risk is small, the risk is large and can be significantly increased and the second-period losses from infection and mitigation $|\pi_2^*|$ are large enough, then the individual prefers to increase risk initially, because this reduces the risk later when it is more costly. 
The logic is similar to vaccination with a live attenuated vaccine which causes disease with a small positive probability (e.g., most poliomyelitis cases were caused by the vaccine over several years until the vaccine was phased out and the wild-type disease made a comeback \citep{macklin+2020}).\footnote{
Another analogy is counter-firing against forest fires (\url{https://www.nwcg.gov/term/glossary/counter-fire}) and controlled burns to prevent large fires. 
}
Vaccination is still optimal, because it reduces the infection risk over several years, and this risk when unvaccinated is substantially larger than the likelihood of catching the disease from the vaccine. 


In terms of the primitive parameters, if the individual plans to reduce risk in period $2$, then in period $1$, increasing the risk is optimal if 
\begin{align}
\label{incrdecr}
r_1(-\ell_1 +\check{a}_{2}r_2\ell_2 +\check{c}) >\frac{\hat{c}}{\hat{a}_{1}-1}.
\end{align} 
If the individual anticipates doing nothing in period $2$, then increasing the risk in period $1$ is optimal when 
\begin{align}
\label{incrconst}
r_1(-\ell_1 +r_2\ell_2) >\frac{\hat{c}}{\hat{a}_{1}-1}.
\end{align} 
The optimality of increasing one's risk depends on the parameter values. The next subsection uses empirical estimates of the epidemic characteristics of the novel coronavirus to calibrate~(\ref{incrdecr}) and~(\ref{incrconst}). For the median estimates, increasing one's risk early in the epidemic is individually rational. The utility and informational assumptions underlying the individual choices are discussed after the calibration. 
Section~\ref{sec:dynamiccalibration} calibrates the individual best response in a continuous time epidemiological model and shows that infecting oneself early remains the optimal strategy.

\subsection{Calibration of the individual decision}
\label{sec:calibration}

Interpret period $2$ as the peak of the epidemic, with the least resources available per patient. Period $1$ may be any time before the peak when the healthcare system still has spare capacity for the epidemic disease carriers. 

The infection rate in the population at the peak of the epidemic is estimated to be 40--70\% \citep{shlain2020}, similarly to the rate \cite{meltzer+2015} estimate for an influenza pandemic. 
Assume no deaths among the 70\% of cases who are non-severe and non-critical. Assume 4.67\% of cases are critical, as in the data of \cite{livingston+bucher2020}. Among critical cases, 49\% survived in the data of \cite{wu+mcgoogan2020}, 38.5\% in \cite{yang+2020} and 21\% in \cite{zhou+2020}. Assume all critical cases die without a ventilator. This is a lower bound on the case-fatality rate when the healthcare system is overwhelmed, because severe cases become critical under a shortage of medical professionals and supplies. An upper bound on the case-fatality rate is that all severe and critical cases die.
 
An estimated 160\,000 ventilators are available in the US as of 1 April 2020 \citep{JHU2020}, which would cover 2.45\% of the critical cases (who form 2\% of the US population of 329 million\footnote{\url{https://www.census.gov/popclock/}} if 5\% of infected are critical and 40\% of the population is infected) at the peak of the epidemic. Getting infected at the peak of the epidemic then results in an expected death probability between $0.0467(1 -0.0245\cdot 0.51)=0.046$ and $0.3(1 -0.0245\cdot 0.21)=0.298$, which is the product of the probability of needing a ventilator and the probability of either not obtaining one or dying even with ventilation. The large upper bound $0.298$ assumes severe cases become critical without medical intervention that is unavailable during the peak of the epidemic, and die after becoming critical. 
Catching the disease early when medical capacity is unconstrained results in an expected death probability $0.0467(1 -0.51) =0.023$ \citep{onder+2020}. 
If expected utility is linear in the death probability, then the loss ratio $\ell_2/\ell_1$ between being infected at the peak of the epidemic and initially is in the range $[\frac{0.046}{0.023},\frac{0.298}{0.023}]$, i.e.,\ from $2$ to $13$. 

Trying to catch the disease deliberately likely has close to 100\% success rate, so in~(\ref{incrconst}), $\hat{a}_1r_1\approx1$. The fraction infected is close to zero early in an epidemic, which makes~(\ref{incrconst}) approximately $-\ell_1 +r_2\ell_2 >\hat{c}$ and~(\ref{incrdecr}) to $-\ell_1 +\check{a}_{2}r_2\ell_2 +\check{c} >\hat{c}$. Compared to the loss of dying with probability over $1.38\%$ \citep{verity+2020}, the cost of increasing or decreasing one's risk is likely small, in which case~(\ref{incrconst}) is approximately $r_2\ell_2>\ell_1$ and~(\ref{incrdecr}) is $\check{a}_{2}r_2\ell_2 >\ell_1$. \cite{shlain2020,meltzer+2015} predict a peak infection rate of 40--70\% even with control measures, so $\check{a}_{2}r_2\geq 0.4$. 
Trying to catch the disease early in an epidemic is individually rational if $0.4 \ell_2/\ell_1\geq 1$, i.e., $\ell_2\geq 2.5\ell_1$, which is at the lower end of the range $[2,13]$ of loss ratios. 

If expected utility decreases faster than linearly in the death probability, then $\ell_2/\ell_1$ increases and deliberate infection early becomes relatively more attractive. One cause of the faster decrease in expected utility is risk aversion, because all death probabilities are below 50\%, so the larger probability is riskier (creates a Bernoulli random variable with greater variance). 
Prospect theory predicts upward distortion of small probabilities, in which case an increase in the likelihood of death translates into a slower than linear reduction in the payoff. Then infecting oneself becomes relatively less attractive. 

Information is imprecise early in an epidemic, so individuals have an option value of waiting, which reduces their motive to increase their risk. 

In a less severe epidemic where the peak infection rate in the US is below $\frac{1}{13}$ of the population (the China scenario), increasing one's risk is not optimal. A lower fraction of critical and severe cases has a similar effect as reducing the peak infection rate. 

If the cost of infecting oneself is large, perhaps due to effective isolation of disease carriers early in an epidemic, then the cost may outweigh the benefit of increasing one's risk. Conversely, if the cost of reducing one's risk is large, then individuals optimally do nothing in period $2$, so their probability of catching the disease at the peak of the epidemic is $r_2$, higher than $\check{a}_2r_2$ for any population infection rate. This increases the motive to increase risk in period $1$. 

A necessary condition for increasing one's risk to be optimal is that the loss increases over time. Otherwise, a non-discounting agent is indifferent to the timing of risk and a discounting agent strictly prefers to delay the infection. 
After the peak of the epidemic, the risk and loss fall over time. For general decision problems with a decreasing hazard rate, the results of \cite{khan+stinchcombe2015} imply that it is always optimal to reduce one's risk and delay the disease. 

If many people decide to get infected early when medical help is still available, then of course the peak of infection arrives early. This gives individuals the incentive to delay their disease past the peak, which in turn delays the peak. Conditional on catching the illness, rational decision makers want it when others are healthy. 
The next section discusses how arbitrage tends to equalize the net loss (the cost of changing the risk, plus the product of the risk and the loss) across periods. 


\section{Infection arbitrage by interacting decision makers}
\label{sec:interacting}

To model the interaction of risk choices, let $r_t$ and $l_t$ in~(\ref{incrdecr}) and~(\ref{incrconst}) increase in the fraction $\hat{\sigma}_t$ of individuals choosing $\hat{a}_t$ and decrease in the fraction $\check{\sigma}_t$ taking $\check{a}_t$. The fall in period $2$ risk and loss when more others choose to decrease risk, as~(\ref{decr2}) shows, weakens the incentive for a given individual to reduce risk in period $2$ and to increase it in period $1$. Similarly, the increase in period $1$ risk and loss when others choose to raise their risk decreases an individual's incentive to increase risk in period $1$ according to~(\ref{incrdecr}) and~(\ref{incrconst}). 

The risk and loss in the second period may also decrease in the fraction infected in period $1$, but any effect across periods is assumed weaker than the corresponding effect within any period. 
Formally, the risk and loss are functions of the action history: $r_t =r_{tf}(\sigma^t)$ and $\ell_t =\ell_{tf}(\sigma^t)$, where $\sigma^t =(\hat{\sigma}_{\tau},\check{\sigma}_{\tau})_{\tau=1,\ldots,t}$. Assume that for $f_t\in\set{r_{tf},\ell_{tf}}$, 
$\frac{\partial f_t}{\partial \hat{\sigma}_t} > \frac{\partial f_2}{\partial \hat{\sigma}_1}\geq 0\geq \frac{\partial f_2}{\partial \check{\sigma}_1} > \frac{\partial f_t}{\partial \check{\sigma}_t}$. 
The assumption $\hat{a}_t<\frac{1}{r_t}$ is modified to $\hat{a}_t<\frac{1}{\max r_{tf}(\sigma^t)}$. 
The interpretation is that infection confers (partial) immunity and a period is longer than the course of illness. This further strengthens the anticoordination motive to catch the disease when others are healthy.

\begin{defi}
\label{def:equil}
A subgame perfect equilibrium consists of choices $\sigma^* =(\hat{\sigma}_t^*,\check{\sigma}_t^*)_{t=1,2}^{}$, risks $r^*=(r_t^*)_{t=1,2}^{}$ and losses $\ell^*=(\ell_t^*)_{t=1,2}^{}$ each period satisfying the following. 
\begin{enumerate}[(a)]
\item 
$\hat{\sigma}_2^*=0$. 
\item 
If $\check{c} <(1-\check{a}_{2})r_2^*\ell_2^*$, then $\check{\sigma}_2^*=1$, but if $\check{c} >(1-\check{a}_{2})r_2^*\ell_2^*$, then $\check{\sigma}_2^*=0$. 
\item 
\label{item:hatsigma}
If $r_1^*(-\ell_1^* +\check{a}_{2}r_2^*\ell_2^* +\check{c}) >\frac{\hat{c}}{\hat{a}_{1}-1}$ and $\check{c} \leq(1-\check{a}_{2})r_2^*\ell_2^*$ or $r_1^*(-\ell_1^* +r_2^*\ell_2^*) >\frac{\hat{c}}{\hat{a}_{1}-1}$ and $\check{c} \geq(1-\check{a}_{2})r_2^*\ell_2^*$, then $\hat{\sigma}_1^* =1$. However, if $r_1^*(-\ell_1^* +\check{a}_{2}r_2^*\ell_2^* +\check{c}) <\frac{\hat{c}}{\hat{a}_{1}-1}$ and $\check{c} \leq(1-\check{a}_{2})r_2^*\ell_2^*$ or $r_1^*(-\ell_1^* +r_2^*\ell_2^*) <\frac{\hat{c}}{\hat{a}_{1}-1}$ and $\check{c} \geq(1-\check{a}_{2})r_2^*\ell_2^*$, then $\hat{\sigma}_1^* =0$. 
\item 
\label{item:checksigma}
If $r_1^*(\ell_1^* -\check{a}_{2}r_2^*\ell_2^* -\check{c}) >\frac{\check{c}}{1-\check{a}_{1}}$ and $\check{c} \leq(1-\check{a}_{2})r_2^*\ell_2^*$ or $r_1^*(\ell_1^* -r_2^*\ell_2^*) >\frac{\check{c}}{1-\check{a}_{1}}$ and $\check{c} \geq(1-\check{a}_{2})r_2^*\ell_2^*$, then $\check{\sigma}_1^* =1$. If the strict inequalities reverse, then $\check{\sigma}_1^* =0$.
\item
$r_t^*=r_	{tf}(\sigma^{t*})$ and $\ell_t^*=\ell_{tf}(\sigma^{t*})$. 
\end{enumerate}
\end{defi}
Subgame perfect equilibria are just called equilibria henceforth. 

Assume for simplicity that the period $2$ risk and loss cannot be reduced enough to make doing nothing optimal even if all agents increase risk in period $1$ and decrease it in period $2$. Formally, $\check{c} \leq(1-\check{a}_{2})\min_{\sigma}\{r_{2f}(\sigma)\ell_{2f}(\sigma)\}$, where $\arg\min_{\sigma}\{r_{2f}(\sigma)\ell_{2f}(\sigma)\}$ is $\hat{\sigma}_1=1$, $\check{\sigma}_1=0$, $\hat{\sigma}_2=0$, $\check{\sigma}_2=1$. 
Then all equilibria feature $\check{\sigma}_2^*=1$ and thus the continuation payoff may be written as 
$\pi_2^*(\sigma_1) =-\check{a}_{2}r_{2f}(\hat{\sigma}_1,\check{\sigma}_1,0,1)\ell_{2f}(\hat{\sigma}_1,\check{\sigma}_1,0,1) -\check{c}$. 


The following proposition characterizes the equilibrium set. 
\begin{prop}
\label{prop:equilibria}
\begin{enumerate}
\item
Doing nothing ($a_1=1$) is an equilibrium if 
\\$-\frac{\hat{c}}{\hat{a}_{1}-1} \leq r_{1f}(0,0)(\ell_{1f}(0,0) +\pi_2^*(0,0)) \leq\frac{\check{c}}{1-\check{a}_{1}}$. 
\item
\label{item:mixdecrease}
Mixing between decreasing the risk and doing nothing is an equilibrium if 
\\$r_{1f}(0,1)(\ell_{1f}(0,1) +\pi_2^*(0,1)) <\frac{\check{c}}{1-\check{a}_{1}} <r_{1f}(0,0)(\ell_{1f}(0,0) +\pi_2^*(0,0))$. 
\item 
\label{item:mixincrease}
Mixing between increasing the risk and doing nothing is an equilibrium if 
\\$-r_{1f}(1,0)(\ell_{1f}(1,0) +\pi_2^*(1,0)) < \frac{\hat{c}}{\hat{a}_{1}-1} <-r_{1f}(0,0)(\ell_{1f}(0,0) +\pi_2^*(0,0))$. 
\item
\label{item:decreasing}
Decreasing the risk is an equilibrium if $r_{1f}(0,1)(\ell_{1f}(0,1) +\pi_2^*(0,1)) \geq\frac{\check{c}}{1-\check{a}_{1}}$. 
\item 
\label{item:increasing}
Increasing the risk is an equilibrium if $-r_{1f}(1,0)(\ell_{1f}(1,0) +\pi_2^*(1,0)) \geq\frac{\hat{c}}{\hat{a}_{1}-1}$. 
\end{enumerate} 
Equilibrium is generically unique. 
\end{prop}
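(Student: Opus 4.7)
The plan is to reduce the problem to the period-$1$ choice and then verify each case directly from Definition~\ref{def:equil}. First I would invoke the blanket assumption $\check{c}\le(1-\check{a}_2)\min_\sigma r_{2f}(\sigma)\ell_{2f}(\sigma)$ to pin down $\hat{\sigma}_2^*=0$ and $\check{\sigma}_2^*=1$ in every subgame-perfect equilibrium, so the continuation value reduces to $\pi_2^*(\hat{\sigma}_1,\check{\sigma}_1)=-\check{a}_2 r_{2f}(\hat{\sigma}_1,\check{\sigma}_1,0,1)\ell_{2f}(\hat{\sigma}_1,\check{\sigma}_1,0,1)-\check{c}$. Introducing the shorthand $X(\hat{\sigma}_1,\check{\sigma}_1):=r_{1f}(\hat{\sigma}_1,\check{\sigma}_1,0,1)\bigl(\ell_{1f}(\hat{\sigma}_1,\check{\sigma}_1,0,1)+\pi_2^*(\hat{\sigma}_1,\check{\sigma}_1)\bigr)$ and using the identity $-\ell_1+\check{a}_2 r_2\ell_2+\check{c}=-(\ell_1+\pi_2^*)$, items~(\ref{item:hatsigma}) and~(\ref{item:checksigma}) of Definition~\ref{def:equil} collapse (on the relevant branch $\check{c}\le(1-\check{a}_2)r_2^*\ell_2^*$) to a pair of scalar comparisons: of $-X$ with $\hat{c}/(\hat{a}_1-1)$ controlling $\hat{\sigma}_1^*$, and of $X$ with $\check{c}/(1-\check{a}_1)$ controlling $\check{\sigma}_1^*$.

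For the three pure-strategy cases ``do nothing,'' ``decrease,'' and ``increase,'' I would substitute $\sigma_1\in\{(0,0),(0,1),(1,0)\}$ into these two comparisons and read off the stated sufficient condition verbatim. For the mixed cases~(\ref{item:mixdecrease}) and~(\ref{item:mixincrease}), the hypothesis that within-period partials strictly dominate cross-period ones makes the maps $\check{\sigma}_1\mapsto X(0,\check{\sigma}_1)$ and $\hat{\sigma}_1\mapsto X(\hat{\sigma}_1,0)$ continuous and strictly monotone (decreasing and increasing, respectively) on $[0,1]$. The strict sandwich hypotheses in the proposition then yield, by the intermediate value theorem, unique interior solutions to the indifference equations $X(0,\check{\sigma}_1)=\check{c}/(1-\check{a}_1)$ and $X(\hat{\sigma}_1,0)=-\hat{c}/(\hat{a}_1-1)$. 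I would also check that the third action is strictly dominated at each such mixed profile, which is automatic because the two thresholds straddle zero: indifference at one threshold places $X$ strictly on the inactive side of the other.

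For generic uniqueness I would show that the five parameter regions have pairwise disjoint interiors. The monotonicity $X(1,0)>X(0,0)>X(0,1)$ implies that the ``do nothing'' and ``decrease'' regions both require $X(0,0)\ge -\hat{c}/(\hat{a}_1-1)$, whereas ``increase'' forces $X(1,0)\le -\hat{c}/(\hat{a}_1-1)$ and hence $X(0,0)<-\hat{c}/(\hat{a}_1-1)$; the symmetric argument with the threshold $\check{c}/(1-\check{a}_1)$ separates ``do nothing'' and ``increase'' from ``decrease''; and the strict sandwiches in~(\ref{item:mixdecrease}) and~(\ref{item:mixincrease}) are incompatible with each other and with all three pure regions. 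Overlaps therefore sit only on the measure-zero set where one of the weak inequalities in the pure cases binds with equality, giving generic uniqueness. The main obstacle I anticipate is purely the bookkeeping through the compound ``either/or'' clauses of items~(\ref{item:hatsigma}) and~(\ref{item:checksigma}) in Definition~\ref{def:equil}; the blanket assumption on $\check{c}$ deletes the second branch in each and reduces the remaining verification to direct algebra.
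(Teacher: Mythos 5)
Your proposal is correct and follows exactly the route the paper intends: the paper omits the proof as ``merely checking the conditions in Definition~\ref{def:equil},'' and your verification does that checking, using the blanket assumption on $\check{c}$ to kill the second branch of items (c) and (d), the sign identity $-\ell_1+\check{a}_2 r_2\ell_2+\check{c}=-(\ell_1+\pi_2^*)$ to reduce everything to comparisons of $X$ with the two thresholds, the intermediate value theorem for the mixed cases (with the observation that the thresholds straddle zero, so the third action never binds at an indifference point), and the monotonicity of $X$ in $\hat{\sigma}_1$ and $\check{\sigma}_1$ for generic uniqueness. The only thing worth flagging is that this monotonicity of $\sigma_1\mapsto r_{1f}(\ell_{1f}+\pi_2^*)$ is itself only asserted, not derived, from the paper's partial-derivative ordering (the within-period and cross-period effects enter $X$ with opposite signs through $r_{1f}$ and $\pi_2^*$), but you are matching the paper's own level of rigor on that point.
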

The proof is merely checking the conditions in Definition~\ref{def:equil} and is thus omitted. 

It is reasonable to assume that if everyone tries to get infected in the first period, then the risk and loss are so large that individuals prefer not to increase their risk---sufficient for this is $\ell_{1f}(1,0)\geq \pi_2^*(1,0)$. Similarly, if all others choose to reduce risk, then an individual prefers either doing nothing or increasing own risk, sufficient for which is $\ell_{1f}(0,1)\leq \pi_2^*(0,1)$. The reason is that imperfect protective measures by the rest of society do not eliminate the disease, but reduce transmission enough that paying the cost $\check{c}$ of risk reduction is individually suboptimal. 
Under these assumptions, the candidate equilibria~\ref{item:decreasing}\ and~\ref{item:increasing}\ in Proposition~\ref{prop:equilibria} are not equilibria. Then the conditions guaranteeing mixed equilibria reduce to $\frac{\check{c}}{1-\check{a}_{1}} <r_{1f}(0,0)(\ell_{1f}(0,0) +\pi_2^*(0,0))$ for decreasing the risk (part~\ref{item:mixdecrease}) and 
$\frac{\hat{c}}{\hat{a}_{1}-1} <-r_{1f}(0,0)(\ell_{1f}(0,0) +\pi_2^*(0,0))$ for increasing the risk (\ref{item:mixincrease}). 

Focus on the parameter values at which the (negative) continuation payoff $\pi_2^*(0,0)$ from doing nothing in period $1$ is larger in absolute value than the loss $\ell_{1f}(0,0)$ in period $1$. The interpretation is that the expected consequences of catching the disease at the peak of the epidemic are worse than when catching it now with certainty. For example, medical supplies run out, many healthcare professionals fall ill and the rest are overloaded with patients, reducing the probability that a given person receives adequate treatment. If treatment is effective enough and the probability that the epidemic exhausts medical capabilities large enough, then getting infected early is optimal. 

When $\ell_{1f}(0,0) <|\pi_2^*(0,0)|$, no equilibrium involves decreasing the risk in the first period. For a small enough cost $\hat{c}$, the unique equilibrium is that a fraction of the population increases their risk and the rest do nothing. 
This is formalized in the following corollary. 
\begin{cor}
\label{cor:increase}
If $-r_{1f}(1,0)(\ell_{1f}(1,0) +\pi_2^*(1,0)) < \frac{\hat{c}}{\hat{a}_{1}-1} <-r_{1f}(0,0)(\ell_{1f}(0,0) +\pi_2^*(0,0))$, then in the unique equilibrium, $\check{\sigma}_1=0$ and $\hat{\sigma}_1^*$ solves $-r_{1f}(\hat{\sigma}_1^*,0)(\ell_{1f}(\hat{\sigma}_1^*,0) +\pi_2^*(\hat{\sigma}_1^*,0)) = \frac{\hat{c}}{\hat{a}_{1}-1}$. 
\end{cor}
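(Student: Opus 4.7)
The plan is to proceed in three stages: rule out every equilibrium candidate of Proposition~\ref{prop:equilibria} except mixing between raising the risk and doing nothing, invoke Proposition~\ref{prop:equilibria}(\ref{item:mixincrease}) to obtain existence, and then derive the indifference condition and argue that it has a unique solution $\hat{\sigma}_1^*\in(0,1)$.

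For the elimination step, I would first appeal to the discussion immediately preceding the corollary: once $\ell_{1f}(0,0)<|\pi_2^*(0,0)|$, the quantity $\ell_{1f}+\pi_2^*$ is negative at the relevant histories, so the right-hand sides in Proposition~\ref{prop:equilibria}(\ref{item:mixdecrease}) and~(\ref{item:decreasing})---both positive thresholds compared against $r_{1f}(\cdot)(\ell_{1f}(\cdot)+\pi_2^*(\cdot))$---cannot be met, which kills any equilibrium with $\check{\sigma}_1>0$. Next, the first inequality of the corollary's hypothesis is exactly the negation of the pure-increasing condition in Proposition~\ref{prop:equilibria}(\ref{item:increasing}), and the second inequality, rewritten as $r_{1f}(0,0)(\ell_{1f}(0,0)+\pi_2^*(0,0))<-\hat{c}/(\hat{a}_1-1)$, violates the lower bound in the doing-nothing condition of Proposition~\ref{prop:equilibria}.1. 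Only mixing between $a_1=\hat{a}_1$ and $a_1=1$ remains, and Proposition~\ref{prop:equilibria}(\ref{item:mixincrease}) supplies it under precisely these inequalities.

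To pin down the mixing probability, I would write out the indifference between $a_1=1$ and $a_1=\hat{a}_1$ as $-r_1\ell_1+(1-r_1)\pi_2^*=-\hat{a}_1r_1\ell_1-\hat{c}+(1-\hat{a}_1r_1)\pi_2^*$; collecting terms gives $(\hat{a}_1-1)r_1(\ell_1+\pi_2^*)=-\hat{c}$, i.e., $-r_1(\ell_1+\pi_2^*)=\hat{c}/(\hat{a}_1-1)$, and substituting $r_1=r_{1f}(\hat{\sigma}_1^*,0)$, $\ell_1=\ell_{1f}(\hat{\sigma}_1^*,0)$, $\pi_2^*=\pi_2^*(\hat{\sigma}_1^*,0)$ (since $\check{\sigma}_1=0$) reproduces the equation stated in the corollary.

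For uniqueness, I would study $G(\hat{\sigma}):=-r_{1f}(\hat{\sigma},0)(\ell_{1f}(\hat{\sigma},0)+\pi_2^*(\hat{\sigma},0))$. The hypothesis gives $G(1)<\hat{c}/(\hat{a}_1-1)<G(0)$, so by continuity of $r_{1f}$, $\ell_{1f}$, $r_{2f}$, and $\ell_{2f}$ a solution exists in $(0,1)$; strict monotonicity of $G$ then forces this solution to be unique. I expect the main obstacle to be exactly this monotonicity claim: because $\ell_{1f}+\pi_2^*$ is negative on the relevant range while $r_{1f}$, $\ell_{1f}$ and $r_{2f}\ell_{2f}$ are all weakly increasing in $\hat{\sigma}_1$, the derivative $G'$ splits into a positive piece driven by $\partial r_{1f}/\partial\hat{\sigma}_1$ and a potentially offsetting piece coming from $\partial\ell_{1f}/\partial\hat{\sigma}_1$ net of $\partial\pi_2^*/\partial\hat{\sigma}_1$. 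Establishing that $G$ is strictly decreasing will require leaning on the strict ordering $\partial f_1/\partial\hat{\sigma}_1>\partial f_2/\partial\hat{\sigma}_1\geq 0$ from Section~\ref{sec:interacting}, together with the sign information $\ell_{1f}+\pi_2^*<0$ at $(0,0)$ (and hence, by continuity and the boundary values of $G$, throughout the interval), which is where I would concentrate the one nontrivial calculation of the proof.
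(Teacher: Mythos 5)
Your route is the same one the paper intends (the paper omits the proof, treating the corollary as a direct check of Definition~\ref{def:equil} via Proposition~\ref{prop:equilibria}): eliminating candidates~\ref{item:mixdecrease}, \ref{item:decreasing}, \ref{item:increasing} and the doing-nothing equilibrium from the hypothesis and the sign of $\ell_{1f}+\pi_2^*$ at the relevant histories, invoking part~\ref{item:mixincrease} for existence, and deriving $-r_1(\ell_1+\pi_2^*)=\hat{c}/(\hat{a}_1-1)$ from indifference are all correct and match the intended argument. (Note that ruling out the pure risk-reduction equilibrium at history $(0,1)$ does need the ``reasonable assumption'' from the preceding discussion, not just the corollary's displayed inequalities; you correctly lean on that discussion.)

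The one step you flag as the ``nontrivial calculation''---strict monotonicity of $G(\hat{\sigma})=-r_{1f}(\hat{\sigma},0)\bigl(\ell_{1f}(\hat{\sigma},0)+\pi_2^*(\hat{\sigma},0)\bigr)$---cannot in fact be delivered by the stated assumptions, so you should not plan on it. Writing $G=r_{1f}\cdot\bigl(\check{a}_2 r_{2f}\ell_{2f}+\check{c}-\ell_{1f}\bigr)$, the bracket is positive wherever the indifference condition can hold (there it equals $\hat{c}/((\hat{a}_1-1)r_{1f})>0$), and it is multiplied by the strictly increasing factor $r_{1f}$; moreover the derivative of the product $r_{2f}\ell_{2f}$ is $r_{2f}\,\partial\ell_{2f}/\partial\hat{\sigma}_1+\ell_{2f}\,\partial r_{2f}/\partial\hat{\sigma}_1$, which the separate orderings $\partial f_1/\partial\hat{\sigma}_1>\partial f_2/\partial\hat{\sigma}_1\geq 0$ for $f\in\{r,\ell\}$ do not bound by $\partial\ell_{1f}/\partial\hat{\sigma}_1$. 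So $G$ need not be decreasing, the indifference equation may have several roots, and uniqueness of $\hat{\sigma}_1^*$ is only the generic uniqueness already hedged in Proposition~\ref{prop:equilibria}. What the assumptions actually give you is existence of a root in $(0,1)$ by the intermediate value theorem (your boundary inequalities $G(1)<\hat{c}/(\hat{a}_1-1)<G(0)$ plus continuity), together with the elimination of all other equilibrium types; that is the content the corollary can support.
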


The arbitrage of infection timing decreases the difference between the losses in different periods relative to doing nothing or reducing risk initially: 
$|\pi_2^*(\hat{\sigma}_1,\check{\sigma}_1)|-\ell_{1f}(\hat{\sigma}_1,\check{\sigma}_1) >|\pi_2^*(\hat{\sigma}_1^*,0)| -\ell_{1f}(\hat{\sigma}_1^*,0) >0$ for any $\hat{\sigma}_1<\hat{\sigma}_1^*$ and any $\check{\sigma}_1$. 
Corollary~\ref{cor:increase} additionally implies $|\pi_2^*(\hat{\sigma}_1^*,0)| -\ell_{1f}(\hat{\sigma}_1^*,0) =\frac{\hat{c}}{r_{1f}(\hat{\sigma}_1^*,0)\hat{a}_{1}-r_{1f}(\hat{\sigma}_1^*,0)}$, which goes to zero in $\hat{c}$, with the implication that if the costs of arbitrage are small, then so is the difference between the initial loss and the expected future loss. 
Arbitrage thus `flattens out' the peak of the epidemic in terms of both infection risk and the loss conditional on infection. For a single agent, this improves payoff by revealed preference. However, strategic interaction may yield a positive or negative welfare effect. The following proposition provides sufficient conditions for increases and decreases in welfare from flattening the peak of the epidemic by early infection, compared to doing nothing. 
\begin{prop}
\label{prop:welfare}
There exists $\Delta_{r\ell2}>0$ s.t.\ if 
\begin{align*}
r_{2f}(0,0,0,1)\ell_{2f}(0,0,0,1) -r_{2f}(1,0,0,1)\ell_{2f}(1,0,0,1)\leq \Delta_{r\ell2},
\end{align*} 
then welfare in the equilibrium in which individuals in period $1$ mix to increase their risk is less than when all individuals do nothing. 
There exists $\Delta_{r\ell1}>0$ s.t.\ if 
\begin{align*}
r_{1f}(1,0)\ell_{1f}(1,0) -r_{1f}(0,0)\ell_{1f}(0,0) <\Delta_{r\ell1}\text{ and }\pi_2(1,0) >\pi_2(0,0),
\end{align*} 
then welfare in the equilibrium in which individuals in period $1$ mix to increase their risk is greater than when all individuals do nothing. 
\end{prop}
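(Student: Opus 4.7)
The plan is to reduce both welfare levels to comparisons of per-capita expected losses. Let
\[
L(\hat{\sigma}_1) := r_{1f}(\hat{\sigma}_1,0)\ell_{1f}(\hat{\sigma}_1,0) + (1-r_{1f}(\hat{\sigma}_1,0))|\pi_2^*(\hat{\sigma}_1,0)|.
\]
Under utilitarian aggregation the per-capita loss in the null equilibrium is $L(0)$; in the mixed equilibrium of Corollary~\ref{cor:increase} the indifference between doing nothing and increasing risk forces the expected loss of either pure strategy, and hence the per-capita loss, to equal $L(\hat{\sigma}_1^*)$. The proposition therefore reduces to signing $L(\hat{\sigma}_1^*)-L(0)$, which splits additively into a nonnegative period-1 transmission-externality piece $r_{1f}(\hat{\sigma}_1^*,0)\ell_{1f}(\hat{\sigma}_1^*,0)-r_{1f}(0,0)\ell_{1f}(0,0)$ and a nonpositive period-2 peak-load piece $(1-r_{1f}(\hat{\sigma}_1^*,0))|\pi_2^*(\hat{\sigma}_1^*,0)|-(1-r_{1f}(0,0))|\pi_2^*(0,0)|$.

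For the first assertion, I would first examine the limit in which $r_{2f}(\cdot,0,0,1)\ell_{2f}(\cdot,0,0,1)$ is flat in $\hat{\sigma}_1$, so $|\pi_2^*(\cdot,0)|\equiv\Pi$. Then $L(\hat{\sigma}_1)=\Pi-r_{1f}(\hat{\sigma}_1,0)[\Pi-\ell_{1f}(\hat{\sigma}_1,0)]$. The mixed-equilibrium equation from Corollary~\ref{cor:increase} reads $r_{1f}(\hat{\sigma}_1^*,0)[\Pi-\ell_{1f}(\hat{\sigma}_1^*,0)]=\hat{c}/(\hat{a}_1-1)$, while the strict upper-bound hypothesis of that corollary gives $r_{1f}(0,0)[\Pi-\ell_{1f}(0,0)]>\hat{c}/(\hat{a}_1-1)$. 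Subtracting yields $L(\hat{\sigma}_1^*)>L(0)$ strictly. Since $L(\hat{\sigma}_1^*)$, $L(0)$ and $\hat{\sigma}_1^*$ are all continuous in the period-2 functional, there exists $\Delta_{r\ell 2}>0$ such that the strict inequality persists whenever $r_{2f}(0,0,0,1)\ell_{2f}(0,0,0,1)-r_{2f}(1,0,0,1)\ell_{2f}(1,0,0,1)\le\Delta_{r\ell 2}$.

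For the second assertion, I would dually examine the limit in which $r_{1f}(\cdot,0)\ell_{1f}(\cdot,0)$ is flat. Since both $r_{1f}$ and $\ell_{1f}$ are monotone in $\hat{\sigma}_1$, constancy of the product forces constancy of each factor; write $r_{1f}\equiv R_1$ and $\ell_{1f}\equiv L_1$. The transmission term vanishes and the peak-load term becomes $(1-R_1)[|\pi_2^*(\hat{\sigma}_1^*,0)|-|\pi_2^*(0,0)|]$. The mixed-equilibrium equation now pins $|\pi_2^*(\hat{\sigma}_1^*,0)|=L_1+\hat{c}/[R_1(\hat{a}_1-1)]$, which the strict bracketing inequalities of Corollary~\ref{cor:increase}—combined with the hypothesis $\pi_2^*(1,0)>\pi_2^*(0,0)$ ensuring the admissible range is nondegenerate—place strictly below $|\pi_2^*(0,0)|$. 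With $R_1<1$ this gives $L(\hat{\sigma}_1^*)<L(0)$. Continuity of the same quantities in the period-1 functional then produces $\Delta_{r\ell 1}>0$ small enough that $r_{1f}(1,0)\ell_{1f}(1,0)-r_{1f}(0,0)\ell_{1f}(0,0)<\Delta_{r\ell 1}$ preserves the inequality.

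The main obstacle is making both continuity arguments precise. Specifically, I must ensure that as $\Delta_{r\ell 2}$ or $\Delta_{r\ell 1}$ shrinks, the equilibrium mixing fraction $\hat{\sigma}_1^*$ stays strictly interior to $(0,1)$, so that there is a genuine mixed equilibrium to compare against the null. This is where I would invoke the fact that Corollary~\ref{cor:increase}'s bracketing inequalities are strict and depend continuously on the primitives, which keeps the interior root robust to small perturbations of $r_{tf}$ and $\ell_{tf}$. Once this is secured the algebraic inequalities above deliver both halves of the proposition.
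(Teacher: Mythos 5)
Your proof is correct and follows essentially the same route as the paper's: use the mixed-equilibrium indifference condition to identify per-capita equilibrium welfare with the do-nothing payoff at the equilibrium aggregates, decompose the welfare difference into a period-1 and a period-2 piece, sign each piece in the corresponding limit, and extend by continuity. One remark: your first-part step---subtracting the equilibrium equality $r_{1f}(\hat{\sigma}_1^*,0)\bigl(|\pi_2^*(\hat{\sigma}_1^*,0)|-\ell_{1f}(\hat{\sigma}_1^*,0)\bigr)=\hat{c}/(\hat{a}_1-1)$ from the strict bracketing inequality at $\hat{\sigma}_1=0$---is actually cleaner than the paper's chained bound, which substitutes $\ell_{1f}+\pi_2^*=\hat{c}/(\hat{a}_1-1)$ where the equilibrium condition gives $\ell_{1f}+\pi_2^*=-\hat{c}/[(\hat{a}_1-1)r_{1f}]$ and so displays an upper bound of the wrong sign; your direct subtraction reaches the intended conclusion without that slip.
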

\begin{proof}
$\pi_2(1,0) =-\check{a}_{2}r_{2f}(1,0,0,1)\ell_{2f}(1,0,0,1) -\check{c}$, so if $r_{2f}(0,0,0,1)\ell_{2f}(0,0,0,1) -r_{2f}(1,0,0,1)\ell_{2f}(1,0,0,1)\leq \Delta_{r\ell2}$, then $\pi_2(1,0) -\pi_2(0,0)\leq \check{a}_{2}\Delta_{r\ell2}$. 
The indifference condition 
\begin{align*}
&-\hat{a}_1r_{1f}(\hat{\sigma}_1^*,0)\ell_{1f}(\hat{\sigma}_1^*,0) -\hat{c} +(1-\hat{a}_1r_{1f}(\hat{\sigma}_1^*,0))\pi_2^*(\hat{\sigma}_1^*,0) 
\\& =-r_{1f}(\hat{\sigma}_1^*,0)\ell_{1f}(\hat{\sigma}_1^*,0) +(1-r_{1f}(\hat{\sigma}_1^*,0))\pi_2^*(\hat{\sigma}_1^*,0)
\end{align*} 
of the mixed equilibrium 
implies that the average payoff in it is $-r_{1f}(\hat{\sigma}_1^*,0)\ell_{1f}(\hat{\sigma}_1^*,0) +(1-r_{1f}(\hat{\sigma}_1^*,0))\pi_2^*(\hat{\sigma}_1^*,0)$. 
The average payoff if everyone does nothing in the first period is 
$-r_{1f}(0,0)\ell_{1f}(0,0) +(1-r_{1f}(0,0))\pi_2^*(0,0)$. The 
difference between the average equilibrium payoff and the average payoff when everyone does nothing is 
\begin{align}
\label{payoffdiff}
&r_{1f}(0,0)\ell_{1f}(0,0)-r_{1f}(\hat{\sigma}_1^*,0)\ell_{1f}(\hat{\sigma}_1^*,0) +(1-r_{1f}(\hat{\sigma}_1^*,0))\pi_2^*(\hat{\sigma}_1^*,0) -(1-r_{1f}(0,0))\pi_2^*(0,0),
\end{align} 
which is continuous. 

The risk and loss increase in period $1$, so if $|\pi_2(1,0)-\pi_2(0,0)|\rightarrow 0$ (the continuation payoff becomes constant in $\sigma_1$), 
then~(\ref{payoffdiff}) converges to  
\begin{align*}
&r_{1f}(0,0)\ell_{1f}(0,0)-r_{1f}(\hat{\sigma}_1^*,0)\ell_{1f}(\hat{\sigma}_1^*,0) +[r_{1f}(0,0)-r_{1f}(\hat{\sigma}_1^*,0)]\pi_2^*(\hat{\sigma}_1^*,0)
\\&<
[r_{1f}(0,0)-r_{1f}(\hat{\sigma}_1^*,0)][\ell_{1f}(\hat{\sigma}_1^*,0) +\pi_2^*(\hat{\sigma}_1^*,0)]
=[r_{1f}(0,0)-r_{1f}(\hat{\sigma}_1^*,0)]\frac{\hat{c}}{\hat{a}_{1}-1} <0, 
\end{align*} 
where the equality comes from Corollary~\ref{cor:increase}. 

If $|r_{1f}(0,0)\ell_{1f}(0,0)- r_{1f}(1,0)\ell_{1f}(1,0)|\rightarrow 0$, then by implication, $|r_{1f}(0,0)\ell_{1f}(0,0)- r_{1f}(\hat{\sigma}_1^*,0)\ell_{1f}(\hat{\sigma}_1^*,0)|\rightarrow 0$, so the payoff difference~(\ref{payoffdiff}) converges to 
$(1-r_{1f}(0,0))[\pi_2^*(\hat{\sigma}_1^*,0) -\pi_2^*(0,0)] >0$. 
\end{proof}

Unsurprisingly, increasing the risk and loss initially without much compensating decrease later worsens welfare. 
On the other hand, the equilibrium (increasing risk initially) is better than doing nothing when the risk and loss do not rise much at the start of the epidemic but fall during the peak. 
The welfare-maximizing strategy depends on the generally nonlinear $r_{tf}$ and $\ell_{tf}$. For any $\sigma$ in which $\hat{\sigma}_t\check{\sigma}_t=0$ (risk is not simultaneously increased and decreased), there may exist risk and loss functions making this $\sigma$ socially optimal. 
For generic risk and loss functions, no equilibrium is efficient, i.e., the welfare-maximizing strategy is not an equilibrium. The reason is the externalities of transmission and peak load imposed by individual decisions. 

To derive clearer welfare comparisons and policy implications, more structure needs to be imposed on the risk and loss functions. The next section derives the risk from the standard SIR model. The loss is derived from dividing the medical capacity by the number infected.

\subsection{Welfare in the SIR model with early infection}
\label{sec:SIR}

The risk function is derived from the standard continuous time SIR compartmental model of \cite{kermack+mckendrick1927}, reformulated in terms of fractions of the population instead of absolute numbers of people. Fraction $S_t$ is susceptible, $I_t$ infected and $R_t=1-S_t-I_t$ recovered, resistant or dead at time $t$. These fractions change over time as $\frac{dS_t}{dt} =-\beta S_tI_t$, $\frac{dI_t}{dt} =\beta S_tI_t-\gamma I_t$ and $\frac{dR_t}{dt} =\gamma I_t$. 
\cite{harko+2014} solve for the paths of change of the susceptible, infected and resistant fractions given initial conditions $(S_0,I_0,R_0)>0$. 


For technical convenience, the SIR model assumes the recovery times of infected people are exponentially distributed. This is unrealistic, because the duration of infection usually has an upper bound.\footnote{\cite{zhou+2020} find the longest duration of viral shedding for Covid-19 is 37 days. This is for a hospitalized case; the duration with mild symptoms is likely shorter.} If the recovery times are exponentially distributed, then increasing risk is never optimal, because the peak of the epidemic from a higher initial infected fraction is always higher and occurs earlier \citep{toda2020}. 
However, with bounded recovery times, infecting part of the population early reduces the height of the peak of the epidemic when the initially infected recover significantly before the peak (which now occurs earlier). There is no closed form solution to the SIR model with bounded recovery times. Figure~\ref{fig:init} shows a numerical example\footnote{
Mathematica code for the example is available on the author's website \url{https://sanderheinsalu.com/}
} 
in which deliberate early infection yields higher welfare. The parameters of the epidemic can be interpreted as incorporating the effect of standard infection control measures, in which case the comparison is between deliberate infection initially, followed by the control policy that would be optimal if early infection was not feasible, and this optimal control without early infection. 

%

\begin{figure}
\caption{Susceptible (blue curve), infected (orange) and recovered (green) fractions of the population without deliberate infection (dashed curves) and with initially infecting 5\% (solid curves). Parameters: $\beta=0.4$, $\gamma=0.1$, average duration of disease $10$ days.}
\label{fig:init}
\includegraphics[width=0.8\textwidth]{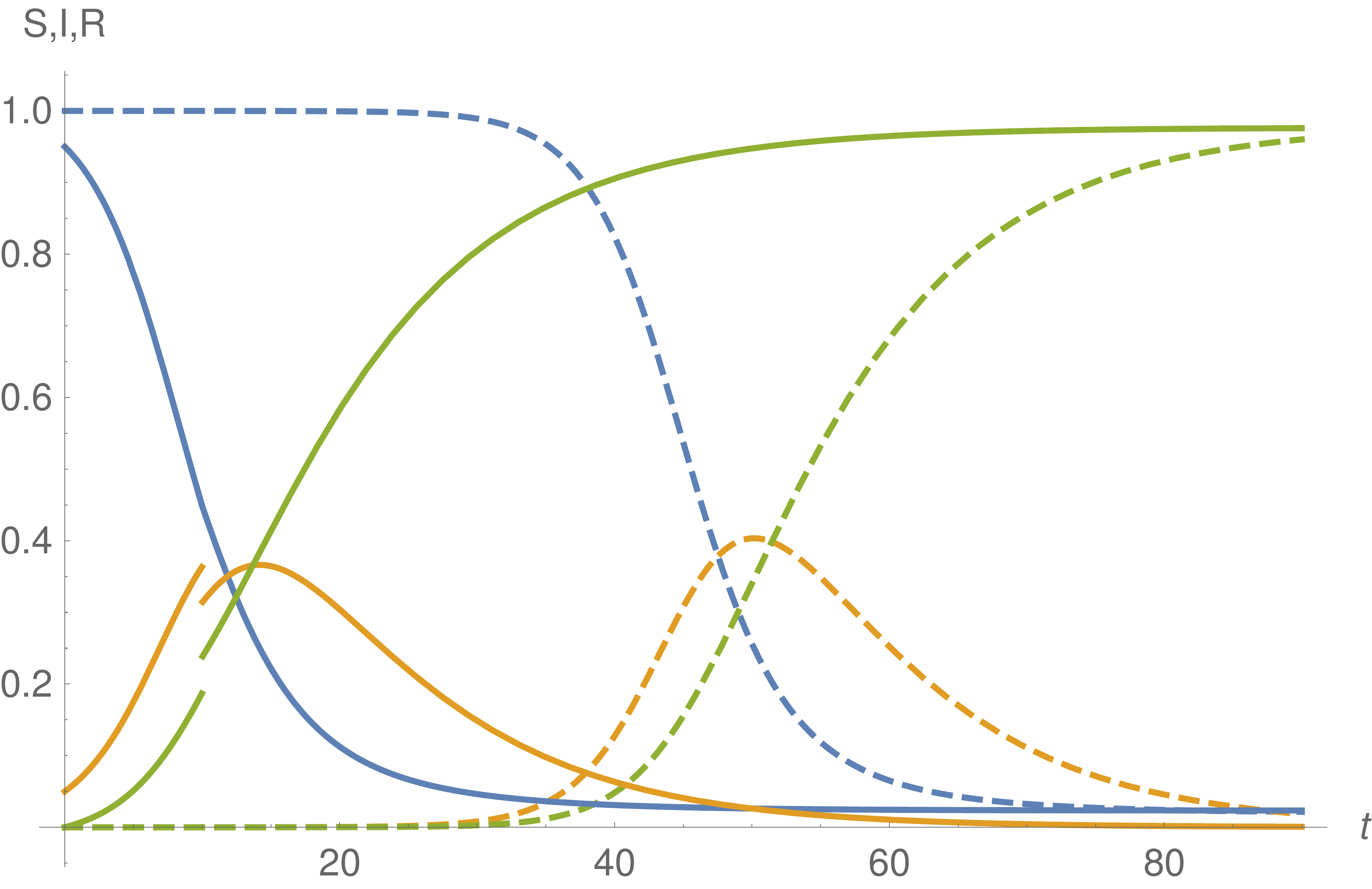}
\end{figure}

For simplicity and easier visualization, the duration of infection is assumed exactly 10 days, so on day 11 all the initially infected people recover. A bounded distribution of infection times would smooth out the downward jump in the fraction infected without changing the qualitative insight. In Figure~\ref{fig:init}, with early deliberate infection the epidemic peaks on days 10 and 14 on which 37\% of the population is infected. Without early infection, the single peak on day $50$ has 40\% of the population infected. 

Of course, if the fraction infected early is large enough, then both peaks of the epidemic are higher than without early infection. The optimal policy needs to be precisely tailored to the parameters of the disease, which is difficult especially for newly emerging infections about which information is limited. 

The SIR model assumes random mixing, so the early infected infect others. If they could be quarantined until they recover, then this increases the welfare gain from deliberate early infection relative to standard policies. Because early infection is also individually optimal, volunteers may be found who are sufficiently altruistic to agree to quarantine in exchange for getting access to the disease agent early. 

If welfare (the negative of the aggregate loss) only depended on the height of the peak, then the policy comparison in Figure~\ref{fig:init} would be clear. However, the loss also depends on how long the infected fraction stays above the capacity of the healthcare system, and how high above. To account for these effects, the loss function is next defined explicitly. Welfare is the negative of the integral of the loss function over the horizon of the epidemic. 

The per-person loss function is assumed constant (normalized to $1$) if the infected fraction $I$ of the population is less than the capacity $K$ of the medical system. If $I>K$, then the loss is $1+\lambda (1-\frac{K}{I})$, with $\lambda>0$ the increase in per-person loss if the capacity were zero. The interpretation is that $\lambda$ is the extra loss from not getting treatment conditional on infection and $1-\frac{K}{I}$ is the probability of not getting treatment due to rationing of the capacity. 


Welfare is   
$-\int_0^{t_{\max}}I_t[1+\text{\textbf{1}}\set{I_t >K}\lambda (1-\frac{K}{I_t})]dt$ over the horizon $t_{\max}$ of the epidemic. As long as $t_{\max}$ is substantially later than the peak of the epidemic, its value (here taken to be $90$ days) does not materially influence the results, because after the peak, the fraction infected declines at an exponential rate. 

As in Section~\ref{sec:calibration}, assume for the US $160\,000$ ventilators, population $329$ million, ventilation needed by between 5\% and 30\% of the infected (the latter when serious cases become critical due to lack of care). In this case, $K$ is between $\frac{160\,000}{0.05\cdot 329\,000\,000 }=0.0097$ and $\frac{160\,000}{0.3\cdot 329\,000\,000 }=0.0016$. 
The death probability without treatment is between $0.0467$ and $0.3$, but with treatment is $0.023$. The ratio of the death probabilities is assumed to be the loss ratio $\ell_2/\ell_1\in[2,13]$, which defines $\lambda=\frac{\ell_2-\ell_1}{\ell_1}\in[1,12]$. 

Table~\ref{tab:welfare} shows the aggregate losses in the cases in which the fraction needing ventilation is 5\% or 30\% and the ratio of expected individual loss between early and peak infection is 1 or 12. The loss is less (welfare is greater) with early infection in all cases. 
\begin{table}
\caption{Aggregate loss $\int_0^{t_{\max}}I_t[1+\text{\textbf{1}}\set{I_t >K}\lambda (1-\frac{K}{I_t})]dt$ with and without early infection. Parameters $\beta=0.4$, $\gamma=0.1$, average duration of disease $10$ days.} 
\label{tab:welfare}
\centering
\begin{tabular}{ccrr}
$K$ & $\lambda$ & Early infection & No early infection \\
\hline
0.0097 & 1 & 17.83 & 18.59 \\
0.0016 & 1 & 18.38 & 19.09 \\
0.0097 & 12 & 112.1 & 117.5 \\
0.0016 & 12 & 118.7 & 123.5 \\
\hline
\end{tabular}
\end{table}

Under other parameter values, the standard policy may yield greater welfare despite the higher peak of infection. The intuition is that flattening the peak with early infection also spreads it wider, so the infection rate may be above the capacity of the healthcare system for a longer time, depending on the capacity. 
If the capacity is low and the loss rises slowly enough in the amount by which the infected exceed the capacity, then the wider and flatter peak with early infection may be worse. 

A similar welfare-improving early risk increase is possible in a discrete-time SEIR model where people go through an exposed state before transitioning to the infected state. Exposed means non-infectious but carrying the disease, as during an incubation period. Let $E_t$ denote the fraction of the population in the exposed state in period $t$. The duration of the exposed period is $d_E$ and the duration of infection $d_I$. The SEIR model is  
\begin{align*}
&S_t=S_{t-1}-\beta S_{t-1}I_{t-1},\qquad E_t=E_{t-1}+\beta S_{t-1}I_{t-1}-\beta S_{t-1-d_E}I_{t-1-d_E},
\\& I_t=I_{t-1}+\beta S_{t-1-d_E}I_{t-1-d_E}-\beta S_{t-1-d_E-d_I}I_{t-1-d_E-d_I}, 
\\&R_t=R_{t-1}+\beta S_{t-1-d_E-d_I}I_{t-1-d_E-d_I}.
\end{align*} 
For a numerical example of early exposure reducing peak infection, take $\beta=1.8$, $d_E=3$ and $d_I=6$. If $S_0=1-10^{-5}$, $E_0=10^{-5}$, $I_0=R_0=0$, then the peak fraction infected is $I_{\max}=0.7055$, but if $S_0=0.95$, $E_0=0.05$, $I_0=R_0=0$, then $I_{\max}=0.7292$.

\section{Dynamic risk choices}
\label{sec:dynamic}

Time is continuous, indexed by $t$, and the horizon infinite. The discount rate is $\rho\geq0$. 
An individual faces a time-varying risk (a Poisson rate) $r_t$ and suffers a loss of size $\ell_t\in\mathbb{R}_+$ if the risk realizes. The risk may be derived from an SIR model but does not have to be. The loss may be derived from the probability of dying, which depends on the available medical capacity at that time during the epidemic. 

Action $a_t$ changes the risk to $a_t r_t$. 
The actions are: do nothing ($a_t=1$), increase risk 
($a_t=\hat{a}_t>1$) 
or decrease it ($a_t=\check{a}\in(0,1)$), with respective costs $c(1)=0$, $c(\hat{a}_t)=\hat{c}$ and $c(\check{a})=\check{c}$. 
Assume $\hat{a}_t$ is continuous in $t$. 

The individual's value at time $t$ under an optimal strategy is denoted $V_t$. The value is bounded above by $0$ because the cost and the loss both cause negative payoffs. The value is bounded below by $-\max_{\tau\geq t}\ell_t >-\infty$, because doing nothing forever is feasible. The individual's dynamic programming problem is 
\begin{align}
\label{vt}
\rho V_t =\max_{a_t}\set{-c(a_t) +\frac{dV_t}{dt} +a_tr_t(-\ell_t -V_t)}. 
\end{align}
The optimal action is 
\begin{align}
\label{optimalaction}
a_t=\begin{cases}
\hat{a}_t & \text{if } (\hat{a}_t-1)r_t(|V_t|-\ell_t) >\hat{c}, \\
\check{a} & \text{if } (1-\check{a})r_t(\ell_t+V_t) >\check{c}, \\
1 & \text{if } (\hat{a}_t-1)r_t(|V_t|-\ell_t) \leq \hat{c} \text{ and } (1-\check{a})r_t(\ell_t+V_t) \leq \check{c}. \\
\end{cases}
\end{align}
For increasing one's risk to be optimal, a necessary condition is $|V_t|>\ell_t$, thus $\ell_t<\max_{\tau\geq t} \ell_{\tau}$. The interpretation is that the peak of the epidemic has not been reached yet. 
The value is 
\begin{align}
\label{conttvalue}
&V_t= -\int_t^{\infty} \exp\left(-\int_t^x (\rho+a_yr_y)dy \right)[c(a_x)+a_xr_x\ell_x]dx
\\&\label{valuebound}
>\overline{V}_t :=-\int_t^{\infty} \exp\left(-\int_t^x (\rho+\hat{a}_yr_y)dy \right)(\check{a}r_x\ell_x)dx. 
\end{align}
The value is continuous for any bounded $(a_x,r_x,\ell_x)_{x=t}^{\infty}$. 

Assume $r_t,\ell_t$ are continuous in $t$, single-peaked (as is typical for an epidemic) with a peak at $t_{\max}>0$. Assume $\lim_{t\rightarrow\infty}r_t\ell_t =0$ (eventually improved vaccines and treatments make the risk and loss arbitrarily small). The following proposition characterises the individually optimal action paths. 

\begin{prop}
\label{prop:contt}
There exists $t_1\geq0$ s.t.\ $a_t=1\;\forall t\geq t_1$. If $-\overline{V}_t >\frac{\hat{c}}{(\hat{a}_t-1)r_{t}}+\ell_{t}$, then 
there exist $t_0<t_{\downarrow} <t_{\max}<t_1$ s.t.\ the optimal strategy is $a_t=\hat{a}_t$ for $t\in[t,t_0)$, $a_t=1$ for $t\in[t_0,t_{\downarrow}]$, $a_t=\check{a}$ for $t\in(t_{\downarrow},t_1)$ and $a_t=1$ for $t\geq t_1$. 
There exists $k$ depending on $\rho$ and $\frac{\hat{c}}{(\hat{a}_t-1)r_{t}}+\ell_{t}$ s.t.\ if $\max_{x\geq t}\check{a}r_x\ell_x\geq k$, then $-\overline{V}_t >\frac{\hat{c}}{(\hat{a}_t-1)r_{t}}+\ell_{t}$. 
\end{prop}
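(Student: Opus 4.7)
The plan is to verify the three sub-claims of the proposition in sequence, using the decision rule (\ref{optimalaction}) together with the value representation in (\ref{conttvalue})--(\ref{valuebound}).

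\textbf{Existence of $t_1$.} Because $\lim_{t\to\infty} r_t\ell_t = 0$ and $r_t,\ell_t$ are single-peaked, both $r_t\to 0$ and $\ell_t\to 0$, and consequently $|V_t|\to 0$ (the tail integral in (\ref{conttvalue}) vanishes by dominated convergence). In the decision rule (\ref{optimalaction}), the LHS of each ``act'' condition is bounded by a multiple of $r_t(|V_t|+\ell_t)$ (using the feasibility bound $(\hat{a}_t-1)r_t<1$), and hence eventually drops below $\hat{c}$ and $\check{c}$. Any $t_1$ past this threshold works.

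\textbf{Four-phase structure.} Under the hypothesis $-\overline{V}_t > \frac{\hat{c}}{(\hat{a}_t-1)r_t}+\ell_t$, the relation in (\ref{valuebound}) between $V_t$ and $\overline{V}_t$ transfers this to $|V_t|>\frac{\hat{c}}{(\hat{a}_t-1)r_t}+\ell_t$, so (\ref{optimalaction}) makes $a_t=\hat{a}_t$ optimal initially. Continuity of $V,r,\ell,\hat{a}$ propagates this to a right-neighbourhood $[t,t_0)$. As time approaches the peak $t_{\max}$, $\ell_t$ grows toward $\max_\tau\ell_\tau$ while $|V_t|$ shrinks (future losses are bounded by $\max_{\tau\geq t}\ell_\tau$ times a vanishing survival factor), so the strict inequality reverses at some $t_0<t_{\max}$; on $[t_0,t_{\downarrow}]$ neither strict condition in (\ref{optimalaction}) holds and $a_t=1$. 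Just past the peak, $\ell_t-|V_t|$ is large enough that $(1-\check{a})r_t(\ell_t+V_t)>\check{c}$, giving $a_t=\check{a}$ on $(t_{\downarrow},t_1)$, which hands off to the doing-nothing phase of the first part at $t_1$.

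\textbf{Sufficient condition on $k$.} Lower-bound $-\overline{V}_t$ by restricting the integral in its definition to an interval $[x_-,x_+]$ around the peak of $\check{a} r_x\ell_x$ on which the integrand is at least $m/2$, where $m=\max_{x\geq t}\check{a} r_x\ell_x$. On this interval the survival factor $\exp(-\int_t^x(\rho+\hat{a}_yr_y)\,dy)$ is bounded below by a positive constant $C$ depending on $\rho$ and the bounded hazard $\hat{a}_yr_y$, yielding $-\overline{V}_t\geq C(x_+-x_-)m/2$. Taking $k$ large enough that this lower bound exceeds $\frac{\hat{c}}{(\hat{a}_t-1)r_t}+\ell_t$ whenever $m\geq k$ gives the stated implication; the announced dependence of $k$ on $\rho$ and on $\frac{\hat{c}}{(\hat{a}_t-1)r_t}+\ell_t$ is visible in this construction.

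The main obstacle is the four-phase step: establishing the strict separations $t_0<t_{\max}<t_1$ and the non-degeneracy of the middle doing-nothing interval. This reduces to verifying $|V_{t_{\max}}|<\ell_{t_{\max}}$ strictly (so the increase phase strictly ends before $t_{\max}$), together with the matching strict inequality $\ell_t-|V_t|>\check{c}/[(1-\check{a})r_t]$ just past the peak. Both follow by bounding $|V_t|$ with an integral of future losses that the single-peakedness of $r_t\ell_t$ keeps strictly below $\ell_{t_{\max}}$ near the peak, but a careful argument is needed to rule out the degenerate case where $t_0$, $t_{\downarrow}$, $t_{\max}$ and $t_1$ collapse.
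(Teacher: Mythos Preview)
Your tail argument for $t_1$ and your use of the $\overline V_t$ bound to trigger the initial $\hat a$ phase match the paper's. The genuine gap is in the middle step. You place the onset of the $\check a$ phase ``just past the peak,'' which would give $t_\downarrow\geq t_{\max}$; the proposition asserts $t_\downarrow<t_{\max}$, so your forward-in-time reasoning does not deliver the required ordering. Relatedly, your claim that $|V_t|$ ``shrinks'' as $t\nearrow t_{\max}$ is not what the bound $|V_t|\leq\max_{\tau\geq t}\ell_\tau$ gives you: on $[t,t_{\max}]$ that bound equals the constant $\ell_{t_{\max}}$, so no shrinking follows. What the bound \emph{does} give is the strict inequality $|V_{t_{\max}}|<\ell_{t_{\max}}$ (since $\ell$ is single-peaked and tends to zero, the survival-weighted integral of future losses is strictly below $\ell_{t_{\max}}$), whence $(\hat a_t-1)r_t(|V_t|-\ell_t)<0<\hat c$ at and beyond $t_{\max}$; continuity pushes this strictly to the left of $t_{\max}$, which is how the paper obtains $t_0<t_{\max}$.

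The paper's route through the four phases is structural rather than chronological. It first notes that continuity of $r_t,\ell_t,\hat a_t,V_t$ makes each action's optimality set an interval, and that the $\hat a$ and $\check a$ intervals are never adjacent: at a $\hat a$-indifference point one has $\ell_t+V_t<0$, so the $\check a$ condition fails strictly there, and vice versa. It then works \emph{backward} from $t_1$: if $t_1>0$, the boundary equation is the $\check a$-indifference $(1-\check a)r_{t_1}(\ell_{t_1}+V_{t_1})=\check c$, so the phase immediately preceding $t_1$ is the $\check a$ interval $(t_\downarrow,t_1)$; non-adjacency then forces a do-nothing interval $[t_0,t_\downarrow]$ before that, and the $\overline V_t$ hypothesis pins the $\hat a$ phase at the very start. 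Your final paragraph is right that the non-degeneracy of the middle interval is the delicate point; the non-adjacency observation is exactly the missing ingredient that closes it.
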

\begin{proof}
The bound $|V_t|\leq \max_{\tau\geq t}\ell_t$ implies $\lim_{t\rightarrow \infty} |V_t|=0$. Therefore there exists $t_1$ s.t.\ for any $t\geq t_1$, 
$(\hat{a}_t-1)r_t(|V_t|-\ell_t) \leq \hat{c}$ and $(1-\check{a})r_t(\ell_t+V_t) \leq \check{c}$. So the optimal action is $a_t=1$.  

Continuity of $\hat{a}_t,r_t,\ell_t,V_t$ in $t$ implies that the set of times at which a given action is optimal is an interval and the intervals for $\hat{a}_t$ and $\check{a}$ are not adjacent. 

Denote by $t_1\geq 0$ the earliest time after which doing nothing forever is optimal. The value at $t_1$ is 
\begin{align}
\label{vt1}
&V_{t_1} =-\int_{t_1}^{\infty}\exp\left(-\int_{t_1}^{t}(\rho+r_y)dy\right)r_t\ell_tdt.
\end{align} 
If $t_1>0$, then $t_1$ solves 
$(1-\check{a})r_{t_1}(V_{t_1}+\ell_{t_1}) =\check{c}$. 

If $t_1>0$, then there exists $t_{\downarrow0}< t_{\max},t_1$ after which increasing one's risk is never optimal, because $|V_t| <\max_{\tau\geq t}\ell_{\tau}$ and $\ell_{t} <\ell_{t_{\max}}\;\forall t\neq t_{\max}$, so $(\hat{a}_t-1)r_t(|V_t|-\ell_t) <0 <\hat{c}\;\forall t\geq t_{\max}$. 
If $t_{\downarrow0}>0$, then $t_{\downarrow}=\inf t_{\downarrow0}$ solves $(1-\check{a})r_{t_{\downarrow}}(\ell_{t_{\downarrow}}+V_{t_{\downarrow}}) =\check{c}$, which is 
\begin{align*}
&-(1-\check{a})r_{t_{\downarrow}}\int_{t_{\downarrow}}^{t_1} \exp\left(-\int_{t_{\downarrow}}^x (\rho+\check{a}r_y)dy \right)[\check{c}+\check{a}r_x\ell_x]dx 
\\&\notag +(1-\check{a})r_{t_{\downarrow}}\exp\left(-\int_{t_{\downarrow}}^{t_1} (\rho+\check{a}r_y)dy \right)V_{t_1} +(1-\check{a})r_{t_{\downarrow}}\ell_{t_{\downarrow}} =\check{c},
\end{align*} 
because $t_{\downarrow}$ is the earliest time after which the individual first reduces risk and then switches to doing nothing. 
The minimality of $t_{\downarrow}$ and the continuity of $r_t,\ell_t,V_t$ imply there exists $\epsilon>0$ s.t.\ $(1-\check{a})r_{t}(V_{t}+\ell_{t}) \leq\check{c}$ and $(\hat{a}_t-1)r_t(|V_t|-\ell_t) \leq \hat{c}$ for all $t\in(t_{\downarrow}-\epsilon,t_{\downarrow})$. Denote the minimal such $t_{\downarrow}-\epsilon$ by $t_0$. If positive, then it solves $(\hat{a}_{t_0}-1)r_{t_0}(|V_{t_0}|-\ell_{t_0}) = \hat{c}$, equivalently 
\begin{align*}
&\int_{t_0}^{t_{\downarrow}} \exp\left(-\int_{t_0}^x (\rho+r_y)dy \right)r_x\ell_xdx - \exp\left(-\int_{t_0}^{t_{\downarrow}} (\rho+r_y)dy \right)V_{t_{\downarrow}} =\frac{\hat{c}}{(\hat{a}_{t_0}-1)r_{t_{0}}} +\ell_{t_{0}}. 
\end{align*}
If $t_0>0$, then by the continuity of $r_t,\ell_t,V_t$, there exists $\eta>0$ s.t.\ $(\hat{a}_t-1)r_t(|V_t|-\ell_t) > \hat{c}$ for all $t\in(t_{0}-\eta,t_{0})$. 
Based on~(\ref{valuebound}), $V_t >\overline{V}_t$. If $-\overline{V}_t >\frac{\hat{c}}{(\hat{a}_t-1)r_{t}}+\ell_{t}$, then $a_t=\hat{a}_t$ is optimal. Sufficient is that $\max_{x\geq t}\check{a}r_x\ell_x$ is above a cutoff which increases in $\rho$ and $\frac{\hat{c}}{(\hat{a}_t-1)r_{t}}+\ell_{t}$. 
\end{proof}

Intuitively, if the risk and loss eventually vanish, then doing nothing becomes optimal from some time onward. If the peak of the epidemic is severe enough, then reducing risk is optimal in a time interval around the peak and increasing one's risk is optimal early on when the risk and loss are low compared to the peak. Between any intervals of raising and reducing risk, there is an interval of doing nothing because the incentives are continuous in time.

Next, the continuous time individual optimization problem is calibrated to the epidemiologic parameters of the 2019 coronavirus. Increasing one's risk early on in the epidemic turns out to be optimal, both when other individuals do nothing and when all individuals respond optimally (so the risk and loss are derived from the decisions in equilibrium).  


\subsection{Calibration of the dynamic individual decision}
\label{sec:dynamiccalibration}

The Mathematica code for the simulations in this section is available on the author's website \url{https://sanderheinsalu.com/}. 
First, an individual decision is calibrated assuming that the rest of society does not respond to the incentives to raise or reduce risk. After that, the background risk is derived in equilibrium from individual choices. 

The unit of time is one day. The discount factor is assumed zero because the time horizon of the pandemic is less than one year. 
The risk function $r_t$ for $t\leq 10^4$ is the rate $\beta I_t$ at which susceptible people get infected in the SIR model of \cite{kermack+mckendrick1927}.\footnote{Fraction $S_t$ is susceptible, $I_t$ infected and $R_t=1-S_t-I_t$ recovered, resistant or dead at time $t$. The fractions evolve as $\frac{dS_t}{dt} =-\beta S_tI_t$, $\frac{dI_t}{dt} =\beta S_tI_t-\gamma I_t$ and $\frac{dR_t}{dt} =\gamma I_t$. Thus a susceptible gets infected at rate $-\frac{dS_t}{S_tdt}=\frac{\beta S_tI_t}{S_t}$. 
}
The parameters are $I_0=10^{-6}$, $\beta=0.4$ and $\gamma=0.1$ (initially, one in a million people is infected and the average duration of the illness is $10$ days). At the peak, 40\% of the population is infected. 
After $t=10^4$, the risk and loss are assumed zero. The interpretation is vaccine and treatment development over the $10^4$ days (27 years). 

The per-person loss function $\ell_t$ for $t\leq 10^4$ is $1+\text{\textbf{1}}\set{I>K}\lambda (1-\frac{K}{I})$, so the loss rises if the infected fraction $I$ of the population exceeds the capacity $K$ of the medical system. 
As in Section~\ref{sec:calibration}, assume for the US $160\,000$ ventilators, population $329$ million, ventilation needed by between 5\% and 30\% of the infected (the latter when serious cases become critical due to lack of care). In this case, $K$ is between $\frac{160\,000}{0.05\cdot 329\,000\,000 } =0.0097$ and $\frac{160\,000}{0.3\cdot 329\,000\,000 } =0.0016$. 
The death probability without treatment is between $0.0467$ and $0.3$, but with treatment is $0.023$. The ratio of the death probabilities is assumed to be the loss ratio $\ell_2/\ell_1\in[2,13]$, which defines $\lambda=\frac{\ell_2-\ell_1}{\ell_1}\in[1,12]$. 

The cost of reducing or increasing risk is assumed $\check{c} =\hat{c} =0.01$. The unit in which costs and losses are measured is the death probability conditional on infection when the medical system functions normally. 
The factor by which risk can be decreased or increased is assumed to be $2$, so that $\check{a}=0.5$ and $\hat{a}=2$. 

At $\lambda=1$ and $K=0.0097$, an individual optimally does nothing after $t_1=84$, reduces risk between $t_{\downarrow}=44$ and $t_1$, does nothing between $t_0=31$ and $t_{\downarrow}$ and increases risk when $t\in[0,t_0)$. 
If $\lambda=12$ and $K=0.0097$, then $t_1=94$, $t_{\downarrow}=38$ and $t_0=26$. 
If $\lambda=1$ and $K=0.0016$, then $t_1=85$, $t_{\downarrow}=44$ and $t_0=25$.
If $\lambda=12$ and $K=0.0016$, then $t_1=103$, $t_{\downarrow}=36$ and $t_0=25$. In all cases, increasing one's risk initially is optimal. Reducing the cost $\hat{c}$ or increasing the effectiveness $\hat{a}_t$ of raising the risk widens the time interval during which raising one's risk is the best response. On the other hand, reducing $\check{c}$ or $\check{a}$ shortens this interval. 

An equilibrium of the dynamic model consists of $S_t,I_t,R_t,r_t,V_t$ and $a_t^*$ such that given $a_t^*$, the aggregate variables evolve according to $\frac{dS_t}{dt} =-a_t^*\beta S_tI_t$, $\frac{dI_t}{dt} =a_t^*\beta S_tI_t-\gamma I_t$, $\frac{dR_t}{dt} =\gamma I_t$, the risk is $r_t=a_t^*\beta I_t$ and the value function is~(\ref{conttvalue}), 
and given $r_t,V_t$, the individual's optimal policy is~(\ref{optimalaction}). 

Equilibrium can be found numerically. In it, the individual's optimal policy remains qualitatively similar to the one calibrated above. For example, if $\lambda=1$ and $K=0.0097$, then each individual optimally does nothing after $t_1=64$, reduces risk between $t_{\downarrow}=20$ and $t_1$, does nothing between $t_0=18$ and $t_{\downarrow}$ and increases risk when $t\in[0,t_0)$. 
The increased risk taken by all individuals early in the epidemic compresses the epidemic in time. The cutoff times $t_0,t_{\downarrow},t_{1}$ and the peak time $t_{\max}$ occur earlier tan when all individuals do nothing. The ex ante value if everyone does nothing is $-1.882$, but the equilibrium value $-1.98$. Welfare is lower in equilibrium than under doing nothing because the transmission externality outweighs the peak load externality.

\section{Conclusion}
\label{sec:conclusion} 

Imperfect measures to reduce infection, even if costless, may be suboptimal. By contrast, deliberately infecting a fraction of the population early, even at extra cost, increases welfare under some parameter values that match the Covid-19 data. 

The policy of infecting part of the population early is difficult to implement. Even relatively safe vaccines are refused. A live unattenuated vaccine (the wild-type disease agent) likely generates even more opposition, although it is probably more immunogenic than safer vaccines. 

Another practical problem is evaluating the optimality of any infection control policy, including early infection, at the start of an epidemic. Information is limited, so the parameters of the disease that determine the optimal control method are imprecisely estimated. Given the risk of worsening the epidemic by infecting too large a fraction of the population early, the principle of `first, do no harm' argues against the early infection strategy. 

An argument against the early infection policy is that the epidemic may not be severe. If its peak still leaves spare capacity in the healthcare system, then there is no benefit to flattening and spreading the peak. 
The benefit relies on (at least partial) immunity generated by early infection, so is negative for diseases that lie dormant in the organism (herpes, HIV, tuberculosis, cancer) and flare up later. For diseases with a long duration (HIV), discounting over the course of the epidemic is a significant factor in preferences and welfare. Discounting reduces the benefit of early infection, which moves the peak of the epidemic earlier. 

Risk aversion increases the benefit of the early infection policy for at least two reasons. A certain early infection with the same expected loss is preferable to an uncertain later one. The probability of death is closer to $\frac{1}{2}$ at the peak of the epidemic than early on, thus  generates a lottery with higher variance, which a risk averse decision maker dislikes. 
Prospect theory preferences, on the other hand, are risk loving for lotteries with negative outcomes, thus reduce the benefit of early infection. 

The benefit calculation for the SIR model in this paper assumes random mixing. 
If instead the initially deliberately infected volunteers are quarantined until they recover, then the welfare gain relative to standard policies is greater.

\bibliographystyle{ecta}
\bibliography{teooriaPaberid} 
\end{document}